\documentclass[12pt]{article}
\usepackage{amsmath}
\usepackage{graphicx,psfrag,epsf}
\usepackage{enumerate}
\usepackage{psfrag,epsf}
\usepackage{url} 
\usepackage{amsmath,amssymb,amsfonts,amsthm,mathtools,mathrsfs}




\usepackage{bm,bbm}
\usepackage{color}
\usepackage{subfigure}
\usepackage{algorithm,algpseudocode}
\usepackage[symbol]{footmisc}
\usepackage{scalefnt}
\usepackage{authblk} 
\usepackage{multirow,centernot}
\usepackage[colorlinks=true, citecolor=blue, urlcolor=blue]{hyperref}
\sloppy

\setcounter{secnumdepth}{4}

\usepackage{natbib}
\usepackage{dsfont}
\usepackage[title]{appendix}
\urlstyle{same}

\input cyracc.def


\usepackage[left=1in,top=1.1in,right=0.5in,bottom=1in]{geometry}

\theoremstyle{definition}

\newtheorem{theorem}{Theorem}[section]

\newtheorem{corollary}[theorem]{Corollary}

\newtheorem{definition}{Definition}[section]

\newtheorem{proposition}[theorem]{Proposition}
\newtheorem{remark}[theorem]{Remark}

\makeatletter
\def\@seccntformat#1{\@ifundefined{#1@cntformat}%
	{\csname the#1\endcsname\quad}
	{\csname #1@cntformat\endcsname}
}
\makeatother

\markright{{\scriptsize RWprewetting-13; version from \today
}}
\newif\ifShowComments
\ShowCommentstrue
\def\strutdepth{\dp\strutbox}
\def\druk#1{\strut\vadjust{\kern-\strutdepth
        {\vtop to \strutdepth{%
                \baselineskip\strutdepth\vss
                        \llap{\hbox{#1}\quad}\null}}}}




\title{\bf
%
%
Closed-form formulas for the biases of the Theil and Atkinson index estimators in Gamma distributed populations
}

\author{
\text{Roberto Vila}$^{1}$\thanks{Corresponding author: Roberto Vila, email: {rovig161@gmail.com}
}
\,\, and
\text{Helton Saulo}$^{1,2}$ 
\\
{\small $^{1}$ Department of Statistics, University of Brasilia, Brasilia, Brazil}\\
{\small $^{2}$ Department of Economics, Federal University of Pelotas, Pelotas, Brazil}\\
}
\setcounter{Maxaffil}{0}


\begin{document}
	\maketitle 	
	\begin{abstract}
%
This paper presents an analysis of the Theil and Atkinson index estimators for gamma populations, highlighting the presence of bias in both cases. Theoretical expressions for the biases are obtained, and bias-corrected estimators are constructed and evaluated through simulation studies.
	\end{abstract}
	\smallskip
	\noindent
	{\small {\bfseries Keywords.} {Gamma distribution, Theil index estimator, Atkinson index estimator, biased estimator.}}
	\\
	{\small{\bfseries Mathematics Subject Classification (2010).} {MSC 60E05 $\cdot$ MSC 62Exx $\cdot$ MSC 62Fxx.}}
%

\section{Introduction}

Measures of economic inequality are vital instruments in the quantitative examination of income distribution.  The Theil and Atkinson indexes are important measures of inequality, both of which have strong foundations in information theory and social welfare analysis, respectively; see \citet{Theil1967, Atkinson1970, Hoffmann2000}.  These indices measure the degree of inequality within a population and have been utilized in economics, software maintenance, environment, sustainability, among others; see, for example, \cite{5609637, ConceicaoGalbraith2000, DU20151080, SUN2015751, Malakar2018}

In this paper, we analyze the bias of the sample Theil $T$, Theil $L$, and Atkinson index estimators under the assumption that the population follows a gamma distribution \citep{McDonald1979}. We derive closed-form expressions for the expectations of these estimators and use them to construct bias-corrected versions. We conduct a Monte Carlo simulation to evaluate the performance of the original and proposed estimators. The simulation results show that the proposed bias-corrected estimators consistently outperform the original ones, substantially reducing the relative bias across all configurations considered. The present study is motivated by the fact that, to the best of our knowledge, no unbiased estimators for the Theil and Atkinson indexes exist, despite the fact that analogous efforts have been made to construct them for the Gini coefficient in gamma populations; see \cite{Baydil2025}.

The rest of this paper proceeds as follows. In Section~\ref{sec:02}, we define the Theil and Atkinson indexes and present their population expressions under the gamma distribution. In Section~\ref{sec:03}, we derive closed-form expectations of the sample estimators, and also present bias expressions for the considered estimators. In Section~\ref{sec:04}, we propose bias-corrected estimators and report results from a simulation study assessing the finite-sample performance of all estimators. Finally, in Section~\ref{sec:05}, we conclude the paper with final remarks.

\section{Theil and Atkinson indexes for gamma distributions}\label{sec:02}

\subsection{Theil index}

The Theil index is a statistical measure of economic inequality, quantifying the divergence from a perfectly equal distribution of income.
	\begin{definition}\label{def-1-1-1}
	The Theil $T$ index \citep{Theil1967} of a random variable $X$ with finite mean $\mathbb{E}(X)=\mu$ is defined as
	\begin{align}\label{Hoover-index}
		T_T=	
		{\mathbb{E}\left[{X\over\mu}\, \log\left({X\over\mu}\right)\right]},
	\end{align}
	and the Theil $L$ index is defined as
	\begin{align}
	T_{L}
	=
	\mathbb{E}
	\left[
	\log\left({\frac{\mu}{X}}\right)
	\right].
	\end{align}	
\end{definition}

As	$X/\mu=\lambda X/\alpha\sim\text{Gamma}(\alpha,\alpha)$ for $X\sim\text{Gamma}(\alpha,\lambda)$, and $\mathbb{E}[Z\log(Z)]=(a/b)[\psi(a)+1/a-\log(b)]$ for $Z\sim\text{Gamma}(a,b)$, the following result follows immediately.
\begin{proposition}\label{prop-Hoover-index}
	The Theil $T$ index for $X\sim\text{Gamma}(\alpha,\lambda)$ is given by
\begin{align*}
	T_T
	=
    \psi(\alpha)+{1\over\alpha}-\log(\alpha),
\end{align*}
where $\psi(x)$ is the digamma function.
\end{proposition}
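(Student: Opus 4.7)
The plan is essentially a one-line calculation once the two identities flagged in the paragraph preceding the proposition are in hand. Set $Z=X/\mu$. Since $\mu=\alpha/\lambda$ for $X\sim\text{Gamma}(\alpha,\lambda)$, the scaling property of the Gamma distribution gives $Z=(\lambda/\alpha)X\sim\text{Gamma}(\alpha,\alpha)$. By definition,
\[
T_T=\mathbb{E}\!\left[\tfrac{X}{\mu}\log\!\big(\tfrac{X}{\mu}\big)\right]=\mathbb{E}[Z\log Z],
\]
so the task reduces to evaluating $\mathbb{E}[Z\log Z]$ for $Z\sim\text{Gamma}(\alpha,\alpha)$.

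I would then apply the stated identity $\mathbb{E}[Z\log Z]=(a/b)\,[\psi(a)+1/a-\log(b)]$ for $Z\sim\text{Gamma}(a,b)$ with $a=b=\alpha$. The prefactor $a/b$ collapses to $1$, and the bracket becomes $\psi(\alpha)+1/\alpha-\log(\alpha)$, yielding the claimed closed form.

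If a derivation of the helper identity is needed to make the proof self-contained, I would sketch it via differentiation under the integral sign: starting from $\int_0^\infty z^{a}e^{-bz}\,\mathrm{d}z=\Gamma(a+1)/b^{a+1}$, differentiating with respect to $a$ produces $\int_0^\infty z^{a}\log(z)e^{-bz}\,\mathrm{d}z=\Gamma(a+1)b^{-(a+1)}[\psi(a+1)-\log b]$; multiplying by $b^{a}/\Gamma(a)$ and using the recursion $\psi(a+1)=\psi(a)+1/a$ gives the formula. This is the only non-trivial step, and even it is routine.

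The main obstacle, such as it is, lies only in being careful with the two conventions (shape–rate vs.\ shape–scale) that circulate for the Gamma distribution: one must verify that the mean used here is $\mu=\alpha/\lambda$ so that the normalized variable $X/\mu$ indeed has shape and rate both equal to $\alpha$. Once that bookkeeping is settled, no further work is required.
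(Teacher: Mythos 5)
Your proposal is correct and follows exactly the route the paper takes: normalize $X$ by its mean so that $X/\mu\sim\text{Gamma}(\alpha,\alpha)$, then apply the identity $\mathbb{E}[Z\log Z]=(a/b)[\psi(a)+1/a-\log(b)]$ with $a=b=\alpha$. Your optional derivation of the helper identity by differentiating under the integral sign is a sound (and standard) way to make the argument self-contained.
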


As $T_L=-	\mathbb{E}
\left[
\log\left({{X}/{\mu}}\right)
\right]$
and 
$X/\mu=\lambda X/\alpha\sim\text{Gamma}(\alpha,\alpha)$ for $X\sim\text{Gamma}(\alpha,\lambda)$, from identity $\mathbb{E}[\log(Z)]=\psi(a)-\log(b)$ for $Z\sim\text{Gamma}(a,b)$, it follows that:
\begin{proposition}\label{prop-Theil-L-index}
	The Theil $L$ index for $X\sim\text{Gamma}(\alpha,\lambda)$ is given by
	\begin{align*}
		T_L
		=
		\log(\alpha)-\psi(\alpha)
		=
		{1\over\alpha}-T_T,
	\end{align*}
	where $\psi(x)$ is the digamma function.
\end{proposition}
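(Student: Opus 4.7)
The plan is to proceed directly from the definition, using the two tools that the paragraph preceding the statement already highlights: the scaling property of the Gamma law, and the closed-form expectation $\mathbb{E}[\log(Z)] = \psi(a) - \log(b)$ for $Z \sim \text{Gamma}(a,b)$. So the first equality is essentially a one-line substitution, and the second equality is a purely algebraic consequence of Proposition~\ref{prop-Hoover-index}.

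In more detail, I would first rewrite
\begin{align*}
T_L = \mathbb{E}\!\left[\log\!\left(\frac{\mu}{X}\right)\right] = -\,\mathbb{E}\!\left[\log\!\left(\frac{X}{\mu}\right)\right],
\end{align*}
and then observe that for $X\sim\text{Gamma}(\alpha,\lambda)$ one has $\mu=\mathbb{E}(X)=\alpha/\lambda$, so that $X/\mu = \lambda X/\alpha$. By the standard scale-closure property of the Gamma distribution, $\lambda X/\alpha \sim \text{Gamma}(\alpha,\alpha)$. Applying the identity $\mathbb{E}[\log(Z)] = \psi(a) - \log(b)$ with $a=b=\alpha$ immediately yields $\mathbb{E}[\log(X/\mu)] = \psi(\alpha) - \log(\alpha)$, and therefore $T_L = \log(\alpha) - \psi(\alpha)$, which is the first equality.

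For the second equality, I would simply invoke Proposition~\ref{prop-Hoover-index}, which gives $T_T = \psi(\alpha) + 1/\alpha - \log(\alpha)$. Subtracting this from $1/\alpha$ gives
\begin{align*}
\frac{1}{\alpha} - T_T = \frac{1}{\alpha} - \psi(\alpha) - \frac{1}{\alpha} + \log(\alpha) = \log(\alpha) - \psi(\alpha) = T_L,
\end{align*}
completing the proof.

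There is no serious obstacle here: the only thing to verify is that the scaling $X/\mu \sim \text{Gamma}(\alpha,\alpha)$ is correctly applied (which is automatic once one remembers that the rate parameter transforms inversely under scaling), and that the sign from rewriting $\log(\mu/X) = -\log(X/\mu)$ is carried through. Both points are already mentioned explicitly in the paragraph preceding the statement, so the proof reduces to a short, almost immediate computation.
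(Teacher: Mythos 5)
Your argument is correct and is exactly the paper's own proof: the text preceding the proposition derives the first equality from $T_L=-\mathbb{E}[\log(X/\mu)]$, the scaling fact $X/\mu\sim\text{Gamma}(\alpha,\alpha)$, and the identity $\mathbb{E}[\log(Z)]=\psi(a)-\log(b)$, with the second equality following algebraically from Proposition~\ref{prop-Hoover-index}. Nothing to add.
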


\subsection{Atkinson index}

The Atkinson index is an indicator of income inequality, ranging from 0 to 1, used to evaluate economic disparity within a society. Increased values correspond to heightened inequality.
\begin{definition}
	The Atkinson index \citep{Atkinson1970}, denoted by $A$, of a random variable $X$ with finite mean $\mathbb{E}(X)=\mu$ is defined as
\begin{align}\label{Atkinson-index}
	A=	
	1-\exp\left\{-\mathbb{E}\left[\log\left({\mu\over X}\right)\right]\right\}
=
	1-\exp(-T_L),
\end{align}
where $T_L$ is the Theil $L$ index established in Definition \ref{def-1-1-1}.
\end{definition}

As a consequence of Proposition \ref{prop-Theil-L-index}, we have:
\begin{proposition}\label{prop-Atkinson -index}
	The Atkinson index for $X\sim\text{Gamma}(\alpha,\lambda)$ is given by
	\begin{align*}
		A
		=
		1-{\exp\{\psi(\alpha)\}\over\alpha},
	\end{align*}
	where $\psi(x)$ is the digamma function.
\end{proposition}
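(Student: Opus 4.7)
The plan is to derive the claimed formula as an immediate algebraic consequence of the two ingredients already at hand: the defining identity $A = 1 - \exp(-T_L)$ from equation~\eqref{Atkinson-index}, and the closed-form expression $T_L = \log(\alpha) - \psi(\alpha)$ given by Proposition~\ref{prop-Theil-L-index}.

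First, I would substitute the Gamma-case value of $T_L$ into the Atkinson definition, obtaining
\[
A = 1 - \exp\bigl\{-(\log(\alpha) - \psi(\alpha))\bigr\} = 1 - \exp\bigl\{\psi(\alpha) - \log(\alpha)\bigr\}.
\]
Applying $\exp(-\log(\alpha)) = 1/\alpha$ and factoring the exponential then collapses the right-hand side to $1 - \exp\{\psi(\alpha)\}/\alpha$, which is the claim.

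There is no real obstacle here; the analytic content sits upstream, in Proposition~\ref{prop-Theil-L-index}, whose proof rests on the identity $\mathbb{E}[\log(Z)] = \psi(a) - \log(b)$ for $Z \sim \text{Gamma}(a,b)$ together with the scaling $X/\mu \sim \text{Gamma}(\alpha,\alpha)$. Once those are in place, the Atkinson formula for Gamma populations reduces to a single line of algebra, which is presumably why the statement is advertised as a direct consequence of Proposition~\ref{prop-Theil-L-index}. A slightly more self-contained alternative would be to compute $\exp\{\mathbb{E}[\log(X/\mu)]\}$ directly from the Mellin-type identity $\mathbb{E}[X^{s}] = \lambda^{-s}\Gamma(\alpha+s)/\Gamma(\alpha)$ by differentiating at $s=0$, but this is strictly longer than the substitution route and yields the same answer.
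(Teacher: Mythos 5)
Your substitution of $T_L=\log(\alpha)-\psi(\alpha)$ into $A=1-\exp(-T_L)$ and the simplification $\exp\{-\log(\alpha)\}=1/\alpha$ is exactly how the paper obtains this result, which it presents as an immediate consequence of Proposition~\ref{prop-Theil-L-index} with no further argument. Your proposal is correct and matches the paper's approach.
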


\section{Determining the biases of the estimators}\label{sec:03}

\subsection{Bias of the Theil  index estimators}

The next results (Theorem \ref{main-theorem} and Proposition \ref{prop-exp-L}) provide closed-form expressions for the expected value of the Theil index estimators $\widehat{T}_T$ and $\widehat{T}_L$, given by 
\begin{align}\label{gini-estimadtor-def}
	\widehat{T}_T
	=
	\dfrac{
		\displaystyle
	\sum_{i=1}^{n}
	X_{i} 
\log\left(\frac{X_i}{\overline{X}}\right)
}{\displaystyle
\sum _{i=1}^{n} X_i},
	\quad 
	n\in\mathbb{N}, 
\end{align}
and
	\begin{align}\label{def-index-L-est}
	\widehat{T}_{L}
	=
	{\frac{1}{n}}
	\sum_{i=1}^{n}
	\log\left({\frac{\overline{X}}{X_{i}}}\right),
		\quad 
	n\in\mathbb{N}, 
\end{align}	
respectively,
where $\overline{X}=(1/n)\sum_{i=1}^{n}X_i$ is the sample mean and $X_1, X_2,\ldots,X_n$ are independent, identically distributed (i.i.d.) observations from the gamma population,  thereby facilitating the calculation of the  estimator's bias in Corollary \ref{main-corollary} and Proposition \ref{main-corollary-1}. 
\begin{remark}
	Note that Theil $T$ index estimator can be written as
	\begin{align}\label{Theil-expression-alternative}
	\widehat{T}_T
	=
		\dfrac{
		\displaystyle
		\sum_{i=1}^{n}
		X_{i} 
		\log(X_i)
	}{\displaystyle
		\sum _{i=1}^{n} X_i}
		-
	\log\left(\sum_{i=1}^{n}
	X_{i}\right)
	+
	\log(n).
	\end{align}
\end{remark}

\begin{remark}
The Theil $L$ index estimator can be represented as
	\begin{align}\label{Theil-L-expression-alternative}
	\widehat{T}_{L}
	=
	\log\left(\sum_{i=1}^n X_i\right)-\log(n)
	-	
	\frac{1}{n}
	\sum_{i=1}^{n}\log(X_{i}).
\end{align}	
\end{remark}

The proofs of Theorems \ref{main-theorem} and \ref{main-theorem-1} follow similar technical steps to those in \cite{Baydil2025}.
	\begin{theorem}\label{main-theorem}
Let $X_1, X_2, \ldots$ be independent copies of $X\sim\text{Gamma}(\alpha,\lambda)$. The following holds:
\begin{align*}
	\mathbb{E}(\widehat{T}_T)
&=
	\psi(\alpha)+{1\over \alpha}
	+
	\log(n)
	-
	{1\over n\alpha}
	-
	\psi(n\alpha),
\end{align*}
where $\psi(x)$ is the digamma function.
	\end{theorem}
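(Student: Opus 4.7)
The plan is to work from the alternative representation of $\widehat{T}_T$ given in \eqref{Theil-expression-alternative}, so that
\begin{align*}
\mathbb{E}(\widehat{T}_T)
=
\mathbb{E}\!\left[\frac{\sum_{i=1}^n X_i\log X_i}{S}\right]
-
\mathbb{E}[\log S] + \log(n),
\end{align*}
where $S=\sum_{i=1}^n X_i$. Since $S\sim\text{Gamma}(n\alpha,\lambda)$, the identity $\mathbb{E}[\log Z]=\psi(a)-\log(b)$ for $Z\sim\text{Gamma}(a,b)$ immediately gives $\mathbb{E}[\log S]=\psi(n\alpha)-\log(\lambda)$. The whole problem therefore reduces to computing the first expectation.

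For that, I would exploit the classical Lukacs-type independence: if $X_1,\ldots,X_n$ are i.i.d.\ $\text{Gamma}(\alpha,\lambda)$, then $U_i:=X_i/S$ is independent of $S$, with $U_1\sim\text{Beta}(\alpha,(n-1)\alpha)$. By exchangeability,
\begin{align*}
\mathbb{E}\!\left[\frac{\sum_{i=1}^n X_i\log X_i}{S}\right]
=
n\,\mathbb{E}\!\left[\frac{X_1\log X_1}{S}\right]
=
n\,\mathbb{E}[U_1\log(S U_1)]
=
n\,\mathbb{E}[U_1]\,\mathbb{E}[\log S] + n\,\mathbb{E}[U_1\log U_1],
\end{align*}
where in the last equality I have used independence of $U_1$ and $S$, and the decomposition $\log(SU_1)=\log S+\log U_1$. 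Note $\mathbb{E}[U_1]=1/n$.

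The remaining task is to compute $\mathbb{E}[U_1\log U_1]$ for $U_1\sim\text{Beta}(\alpha,(n-1)\alpha)$. The standard device is to differentiate the Mellin-type identity
\begin{align*}
\mathbb{E}[U_1^s]
=
\frac{\Gamma(\alpha+s)\,\Gamma(n\alpha)}{\Gamma(\alpha)\,\Gamma(n\alpha+s)}
\end{align*}
with respect to $s$ and evaluate at $s=1$, yielding
\begin{align*}
\mathbb{E}[U_1\log U_1]
=
\mathbb{E}[U_1]\bigl[\psi(\alpha+1)-\psi(n\alpha+1)\bigr]
=
\frac{1}{n}\!\left[\psi(\alpha)+\frac{1}{\alpha}-\psi(n\alpha)-\frac{1}{n\alpha}\right],
\end{align*}
after applying $\psi(x+1)=\psi(x)+1/x$. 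Substituting back and cancelling the $\log(\lambda)$ terms produces exactly the claimed formula. The only delicate step is the Lukacs independence / beta-gamma decoupling; everything else is routine manipulation of $\psi$ and $\Gamma$.
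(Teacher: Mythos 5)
Your proof is correct, and it reaches the paper's key intermediate identity
\begin{align*}
\mathbb{E}\!\left[\frac{\sum_{i=1}^{n}X_i\log (X_i)}{\sum_{i=1}^{n}X_i}\right]
=
\psi(\alpha)+\frac{1}{\alpha}-\log(\lambda)-\frac{1}{n\alpha}
\end{align*}
by a genuinely different route. The paper writes $1/S$ as $\int_0^\infty e^{-Sx}\,{\rm d}x$, pulls the integral outside via Tonelli, and evaluates $\mathbb{E}[X\log(X)e^{-Xx}]$ together with the Laplace transform $\mathscr{L}_F^{\,n-1}(x)=[\lambda/(x+\lambda)]^{\alpha(n-1)}$; the final answer comes from an explicit integral over $x$. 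You instead invoke the Lukacs/beta--gamma decoupling — $U_1=X_1/S$ is $\text{Beta}(\alpha,(n-1)\alpha)$ and independent of $S\sim\text{Gamma}(n\alpha,\lambda)$ — which turns the ratio expectation into $n\,\mathbb{E}[U_1]\,\mathbb{E}[\log S]+n\,\mathbb{E}[U_1\log U_1]$, with the last term obtained by differentiating the Mellin transform of the beta law. Your argument is arguably cleaner and makes the cancellation of $\log(\lambda)$ and the appearance of $\psi(\alpha+1)-\psi(n\alpha+1)$ transparent, but it leans on a structural property specific to the gamma family; the paper's integral-representation technique is more computational yet transfers to other functionals and other parent distributions (it is the same device used for Theorem \ref{main-theorem-1} and in \cite{Baydil2025}). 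The only steps you should make explicit for completeness are the justification for differentiating $\mathbb{E}[U_1^s]$ under the expectation at $s=1$ (dominated convergence, using $U_1\in(0,1)$) and the integrability needed to split $\mathbb{E}[U_1\log(SU_1)]$ into the two finite pieces; both are routine here.
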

\begin{proof}
Utilizing the well-established identity
\begin{align}\label{fund-id}
	\int_{0}^{\infty}\exp(-z x){\rm d}x
	=
	{1\over z},
	\quad z>0,
\end{align}
with $z=\sum_{i=1}^{n} X_i$, 
we obtain
\begin{align}
	\mathbb{E}\left[\dfrac{\displaystyle\sum_{i=1}^{n}
		X_{i} 
		\log(X_i)}{\displaystyle\sum_{i=1}^{n} X_i}\right]
	&=
	\mathbb{E}\left[\sum_{i=1}^{n}	X_{i} 
	\log(X_i) \int_{0}^{\infty}\exp\left\{-\left(\displaystyle\sum_{i=1}^{n} X_i\right) x\right\}{\rm d}x\right]
	\nonumber
	\\[0,2cm]
	&=
	\sum_{i=1}^{n}
	\mathbb{E}\left[\, 
	X_{i} \log(X_i) 
	\int_{0}^{\infty}\exp(-X_i x)
	\mathbb{E}\left[
	\exp\left\{-\left(\displaystyle\sum_{\substack{j=1\\ j\neq i}}^n X_j\right) x\right\}
	\right]
	{\rm d}x\right]
			\nonumber
	\\[0,2cm]
	&=
	\sum_{i=1}^{n}
	\mathbb{E}\left[\, 
	X_{i} \log(X_i) 
	\int_{0}^{\infty}\exp(-X_i x)
\mathscr{L}^{n-1}_F(x)
	{\rm d}x\right]
				\nonumber
	\\[0,2cm]
	&=
n
	\mathbb{E}\left[\, 
	X \log(X) 
	\int_{0}^{\infty}\exp(-X x)
	\mathscr{L}^{n-1}_F(x)
	{\rm d}x\right],
	\label{eq-1} 
\end{align}
where $\mathscr{L}_F(x)$ is the Laplace transform corresponding to $X\sim\text{Gamma}(\alpha,\lambda)$,  the second equality is consequence of the independence of $X_1,X_2,\ldots$ and the last equality arises from the fact that $X_1,X_2,\ldots$ are identically distributed with $X\sim\text{Gamma}(\alpha,\lambda)$.
Since the above integrands are non-negative measurable functions, Tonelli's theorem allows us to commute the order of integration, that is, the expression in \eqref{eq-1}  is
\begin{align*}
=
n
	\int_{0}^{\infty}
	\mathbb{E}\left[\, 
		X \log(X)  
		\exp(-X x)
		\right]
				\mathscr{L}^{n-1}_F(x)
	{\rm d}x.
\end{align*}
As $\mathscr{L}_F(x)=[\lambda/(x+\lambda)]^{\alpha}$, from the above identities we have
\begin{align}\label{id-1-1}
	\mathbb{E}\left[\dfrac{\displaystyle\sum_{i=1}^{n}
	X_{i} 
	\log(X_i)}{\displaystyle\sum_{i=1}^{n} X_i}\right]
=
n
\int_{0}^{\infty}
\mathbb{E}\left[\, 
X \log(X)  
\exp(-X x)
\right]
\left({\lambda\over x+\lambda}\right)^{\alpha(n-1)}
{\rm d}x.
\end{align}

On the other hand, it is simple to observe that
\begin{align*}
	\mathbb{E}\left[\, 
	X \log(X)  
	\exp(-X x)
	\right]
	=
	{\alpha\lambda^\alpha\over (x+\lambda)^{\alpha+1}} \,
	\mathbb{E}[\log(U)],
\end{align*}
where $U\sim\text{Gamma}(\alpha+1,x+\lambda)$. As $\mathbb{E}[\log(Z)]=\psi(a)-\log(b)$ for $Z\sim\text{Gamma}(a,b)$, we get
\begin{align}\label{id-2-2}
	\mathbb{E}\left[\, 
	X \log(X)  
	\exp(-X x)
	\right]
	=
	{\alpha\lambda^\alpha\over (x+\lambda)^{\alpha+1}} \,
	\left[\psi(\alpha+1)-\log(x+\lambda)\right].
\end{align}

By replacing \eqref{id-2-2} in \eqref{id-1-1}, we get
\begin{align}\label{id-3-3}
	\mathbb{E}\left[\dfrac{\displaystyle\sum_{i=1}^{n}
		X_{i} 
		\log(X_i)}{\displaystyle\sum_{i=1}^{n} X_i}\right]
	=
	\psi(\alpha+1)-\log(\lambda)-{1\over n\alpha}
	=
	\psi(\alpha)+{1\over \alpha}-\log(\lambda)-{1\over n\alpha },
\end{align}
where in last step we have applied the well-known recurrence relation:
$\psi(x+1)=\psi(x)+1/x$.

Furthermore, as $\sum_{i=1}^n X_i\sim\text{Gamma}(n\alpha,\lambda)$ and $\mathbb{E}[\log(Z)]=\psi(a)-\log(b)$ for $Z\sim\text{Gamma}(a,b)$, we have
\begin{align}\label{id-4-4}
\mathbb{E}\left[	\log\left(\sum_{i=1}^{n}
X_{i}\right)\right]
=
\psi(n\alpha)-\log(\lambda).
\end{align}

Finally, taking expectation in formula \eqref{Theil-expression-alternative} and then using identities \eqref{id-3-3} and \eqref{id-4-4}, the proof of theorem follows.
\end{proof}

\begin{remark}	
	Given the scale invariance of the Theil $T$ index estimator, $\widehat{T}_T$, it follows that its expectation, $\mathbb{E}(\widehat{T}_T)$, does not depend on the rate $\lambda$, as stated in Theorem \ref{main-theorem}.
\end{remark}

\begin{remark}
	As 
	$
		\lim_{n\to\infty}
	\left[
		\log(n)
	-
	{1/(n\alpha)}
	-
	\psi(n\alpha)
	\right]
	=
	-\log(\alpha),
	$
	from Theorem \ref{main-theorem} and Proposition \ref{prop-Hoover-index} it follows that
	$
			\lim_{n\to\infty}\mathbb{E}(\widehat{T}_T)=T_T.
	$
\end{remark}

By combining Proposition \ref{prop-Hoover-index} and Theorem \ref{main-theorem}, we have:
\begin{corollary}\label{main-corollary}
The bias of $\widehat{T}_T$ relative to $T_T$, denoted by $\text{Bias}(\widehat{T}_T,T_T)$, can be written as
\begin{align}\label{bias-Hoover-geometric}
\text{Bias}(\widehat{T}_T,T_T)
=
\log(n\alpha)
-
{1\over n\alpha}
-
\psi(n\alpha),
\end{align}
where $\psi(x)$ is the digamma function.
\end{corollary}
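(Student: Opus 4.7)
The plan is simply to invoke the definition of bias together with the two formulas already established. By definition,
\begin{align*}
\text{Bias}(\widehat{T}_T, T_T) = \mathbb{E}(\widehat{T}_T) - T_T,
\end{align*}
so I would substitute the expression for $\mathbb{E}(\widehat{T}_T)$ obtained in Theorem \ref{main-theorem}, namely $\psi(\alpha) + 1/\alpha + \log(n) - 1/(n\alpha) - \psi(n\alpha)$, and the population value $T_T = \psi(\alpha) + 1/\alpha - \log(\alpha)$ from Proposition \ref{prop-Hoover-index}.

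The subtraction cancels the terms $\psi(\alpha)$ and $1/\alpha$ that appear in both expressions, leaving $\log(n) + \log(\alpha) - 1/(n\alpha) - \psi(n\alpha)$. The final step is the elementary logarithmic identity $\log(n) + \log(\alpha) = \log(n\alpha)$, which yields the claimed formula \eqref{bias-Hoover-geometric}.

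There is no real obstacle: the result is a one-line algebraic consequence of combining Proposition \ref{prop-Hoover-index} and Theorem \ref{main-theorem}. All analytic work (Tonelli's theorem, the Laplace transform computation, the recurrence $\psi(x+1) = \psi(x) + 1/x$) has already been carried out in the proof of the theorem, so the corollary's proof reduces to bookkeeping.
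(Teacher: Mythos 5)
Your proposal is correct and matches the paper's (implicit) argument exactly: the paper itself derives the corollary simply "by combining Proposition \ref{prop-Hoover-index} and Theorem \ref{main-theorem}," which is precisely the subtraction and logarithm consolidation you carry out. Nothing further is needed.
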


\begin{remark}\label{rem-1}
	Since $\log(x)-1/x-\psi(x)<0$ for all $x>0$, by Corollary \ref{main-corollary}, we have 
$
	\text{Bias}(\widehat{T}_T,T_T)<0.
$
\end{remark}

\begin{proposition}\label{prop-exp-L}
	Let $X_1, X_2, \ldots$ be independent copies of $X\sim\text{Gamma}(\alpha,\lambda)$. It holds that:
\begin{align*}
	\mathbb{E}(\widehat{T}_{L})
	=
	\psi(n\alpha)-\log(n)-\psi(\alpha)
	=
	{1\over \alpha}
	-
	{1\over n\alpha}
	-
	\mathbb{E}(\widehat{T}_T),
\end{align*}
where $\psi(x)$ is the digamma function.
\end{proposition}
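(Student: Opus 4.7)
The plan is to exploit the alternative representation \eqref{Theil-L-expression-alternative}, which rewrites $\widehat{T}_L$ as an affine combination of $\log(\sum_{i=1}^n X_i)$ and $\frac{1}{n}\sum_{i=1}^n \log(X_i)$. Unlike the Theil $T$ estimator, there is no ratio structure involved, so the Laplace-transform/Tonelli argument from the proof of Theorem \ref{main-theorem} is not needed here; linearity of expectation already does all the work.

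Concretely, I would proceed as follows. First, observe that $\sum_{i=1}^n X_i \sim \text{Gamma}(n\alpha, \lambda)$, since the $X_i$ are i.i.d.\ $\text{Gamma}(\alpha,\lambda)$. Applying the identity $\mathbb{E}[\log(Z)] = \psi(a) - \log(b)$ for $Z \sim \text{Gamma}(a,b)$ (already used in the proof of Theorem~\ref{main-theorem}, cf.~\eqref{id-4-4}), we get
\begin{align*}
\mathbb{E}\Bigl[\log\Bigl(\sum_{i=1}^n X_i\Bigr)\Bigr] = \psi(n\alpha) - \log(\lambda),
\qquad
\mathbb{E}\bigl[\log(X_i)\bigr] = \psi(\alpha) - \log(\lambda).
\end{align*}
Taking expectation term-by-term in \eqref{Theil-L-expression-alternative} and using linearity, the $\log(\lambda)$ contributions cancel (reflecting the scale invariance of $\widehat{T}_L$), yielding
\begin{align*}
\mathbb{E}(\widehat{T}_L) = \bigl[\psi(n\alpha) - \log(\lambda)\bigr] - \log(n) - \bigl[\psi(\alpha) - \log(\lambda)\bigr] = \psi(n\alpha) - \log(n) - \psi(\alpha),
\end{align*}
which gives the first equality.

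For the second equality, I would simply substitute the closed-form expression for $\mathbb{E}(\widehat{T}_T)$ from Theorem \ref{main-theorem} into the right-hand side and verify algebraically that
\begin{align*}
\frac{1}{\alpha} - \frac{1}{n\alpha} - \mathbb{E}(\widehat{T}_T)
= \frac{1}{\alpha} - \frac{1}{n\alpha} - \Bigl[\psi(\alpha) + \frac{1}{\alpha} + \log(n) - \frac{1}{n\alpha} - \psi(n\alpha)\Bigr]
= \psi(n\alpha) - \log(n) - \psi(\alpha),
\end{align*}
matching the expression derived above. There is no real obstacle in this proof: the only subtlety is recognizing that the additive (rather than ratio) structure of $\widehat{T}_L$ makes the computation essentially a direct application of two facts already assembled in the proof of Theorem \ref{main-theorem}, namely the distribution of $\sum_i X_i$ and the gamma log-moment identity.
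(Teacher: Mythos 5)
Your proof is correct and follows exactly the same route as the paper: apply linearity of expectation to the representation \eqref{Theil-L-expression-alternative}, use $\sum_{i=1}^n X_i \sim \text{Gamma}(n\alpha,\lambda)$ together with the gamma log-moment identity, and note the cancellation of $\log(\lambda)$. The explicit algebraic check of the second equality against Theorem \ref{main-theorem} is a harmless addition that the paper leaves implicit.
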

\begin{proof}
	By using \eqref{Theil-L-expression-alternative} and the fact that $X_1,X_2,\ldots$ are identically distributed with $X\sim\text{Gamma}(\alpha,\lambda)$, we have
		\begin{align*}
		\mathbb{E}(\widehat{T}_{L})
		=
		\mathbb{E}\left[\log\left(\sum_{i=1}^n X_i\right)\right]
		-\log(n)
		-	
		\mathbb{E}\left[\log(X)\right].
	\end{align*}	
	Since $\sum_{i=1}^n X_i\sim\text{Gamma}(n\alpha,\lambda)$ and $\mathbb{E}[\log(Z)]=\psi(a)-\log(b)$ for $Z\sim\text{Gamma}(a,b)$, the proof follows immediately.
\end{proof}

\begin{remark}	
	Due to the scale invariance property of the Theil $L$ index estimator, $\widehat{T}_L$, it follows that its expectation, $\mathbb{E}(\widehat{T}_L)$, does not depend on the rate $\lambda$, as established in Proposition \ref{prop-exp-L}.
\end{remark}

\begin{remark}
	As $\lim_{n\to\infty}[\psi(n\alpha)-\log(n)]=\log(\alpha)$, by Propositions \ref{prop-exp-L} and \ref{prop-Theil-L-index}, its clear that
$
	\lim_{n\to\infty}\mathbb{E}(\widehat{T}_L)=T_L.
$
\end{remark}

By combining Propositions \ref{prop-Theil-L-index} and \ref{prop-exp-L}, we have:
\begin{proposition}\label{main-corollary-1}
	The bias of $\widehat{T}_L$ relative to $T_L$, denoted by $\text{Bias}(\widehat{T}_L,T_L)$, can be written as
\begin{align*}
	\text{Bias}(\widehat{T}_L,T_L)
	=
	\psi(n\alpha)-\log(n\alpha)
	=
	-\text{Bias}(\widehat{T}_T,T_T)
	-
	{1\over n\alpha},
\end{align*}
where $\psi(x)$ is the digamma function.
\end{proposition}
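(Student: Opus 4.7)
The plan is to compute $\text{Bias}(\widehat{T}_L, T_L) = \mathbb{E}(\widehat{T}_L) - T_L$ by direct substitution, drawing the expectation from Proposition \ref{prop-exp-L} and the population index from Proposition \ref{prop-Theil-L-index}. Specifically, I would write
$$
\text{Bias}(\widehat{T}_L, T_L)
=
\bigl[\psi(n\alpha) - \log(n) - \psi(\alpha)\bigr]
-
\bigl[\log(\alpha) - \psi(\alpha)\bigr],
$$
observe that the $\psi(\alpha)$ terms cancel, and combine the logarithms via $\log(n) + \log(\alpha) = \log(n\alpha)$ to obtain the first claimed expression $\psi(n\alpha) - \log(n\alpha)$.

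For the second equality, the plan is to invoke Corollary \ref{main-corollary}, which gives $\text{Bias}(\widehat{T}_T, T_T) = \log(n\alpha) - 1/(n\alpha) - \psi(n\alpha)$. Negating this and subtracting $1/(n\alpha)$ yields $\psi(n\alpha) - \log(n\alpha)$, matching the first expression. This confirms the identity $\text{Bias}(\widehat{T}_L, T_L) = -\text{Bias}(\widehat{T}_T, T_T) - 1/(n\alpha)$.

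Since both ingredients are already established earlier in the section, the proof is essentially a one-line algebraic manipulation and no genuine obstacle is expected; the only thing to watch is keeping signs and the $\log(n\alpha)$ combination straight. It may also be worth remarking, in parallel to Remark \ref{rem-1}, that since $\psi(x) - \log(x) < 0$ for all $x > 0$, one has $\text{Bias}(\widehat{T}_L, T_L) < 0$, consistent with the negativity of $\text{Bias}(\widehat{T}_T, T_T)$ up to the correction $-1/(n\alpha)$.
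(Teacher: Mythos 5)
Your proposal is correct and matches the paper's argument exactly: the paper derives this result "by combining Propositions \ref{prop-Theil-L-index} and \ref{prop-exp-L}," which is precisely the subtraction you carry out, and the second equality follows from Corollary \ref{main-corollary} just as you describe. No gaps; the sign bookkeeping and the $\log(n)+\log(\alpha)=\log(n\alpha)$ combination are handled correctly.
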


\begin{remark}
	Since $\psi(x)-\log(x)<0$ for all $x>0$, by Proposition \ref{main-corollary-1}, we obtain
	$
		\text{Bias}(\widehat{T}_L,T_L)<0,
	$
	and consequently,
	$
	\text{Bias}(\widehat{T}_T,T_T)>	-
	{1/(n\alpha)}.
	$
\end{remark}

\subsection{Bias of the Atkinson index estimator}

Theorem \ref{main-theorem-1} provides a closed-form expression for the expected value of the Atkinson index estimator $\widehat{A}$, defined as 
\begin{align}\label{gini-estimadtor-def-1}
	\widehat{A}
	=
	1
	-
	\dfrac{
		\displaystyle
		\left(
		\prod_{i=1}^{n}
		X_{i} 
		\right)^{1/n}
	}{\displaystyle
		{1\over n} \sum _{i=1}^{n} X_i},
	\quad 
	n\in\mathbb{N}, 
\end{align}
where 
$X_1, X_2,\ldots,X_n$ are i.i.d. observations from the gamma population,  thereby facilitating bias computation in Corollary \ref{main-corollary-11}.

\begin{remark}\label{relationship-A-TL}
	Observe that the Atkinson  index estimator can be expressed as
$
	\widehat{A}
	=
	1-\exp(\widehat{T}_L),
$
where $\widehat{T}_L$ is the Theil $L$ index estimator defined in \eqref{def-index-L-est}.
\end{remark}

Despite the relationship between $\widehat{A}$ and $\widehat{T}_L$ provided by Remark \ref{relationship-A-TL}, and the established expectation of $\widehat{T}_L$ (see Proposition \ref{prop-exp-L}), note that determining the expected value of $\widehat{A}$ presents significant challenges.
\begin{theorem}\label{main-theorem-1}
	Let $X_1, X_2, \ldots$ be independent copies of $X\sim\text{Gamma}(\alpha,\lambda)$. We have the following result:
	\begin{align*}
		\mathbb{E}(\widehat{A})
		&=
		1-{\Gamma^n(\alpha+{1\over n})\over\alpha\Gamma^n(\alpha)},
	\end{align*}
	with $\Gamma(x)$ being the (complete) gamma function.
\end{theorem}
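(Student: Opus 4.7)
The plan is to reduce the computation of $\mathbb{E}(\widehat{A})$ to evaluating $\mathbb{E}\bigl[(\prod X_i)^{1/n}/\sum X_i\bigr]$, since $\widehat{A}=1-n(\prod_{i=1}^n X_i)^{1/n}/\sum_{i=1}^n X_i$. The key trick, exactly as in the proof of Theorem \ref{main-theorem}, is to linearize the $1/\sum X_i$ factor via the Laplace identity \eqref{fund-id} with $z=\sum_{i=1}^n X_i$. After swapping expectation and integral by Tonelli's theorem (all integrands are non-negative), independence turns the expectation of the product into a product of identical one-dimensional expectations:
\begin{align*}
\mathbb{E}\!\left[\frac{(\prod_{i=1}^n X_i)^{1/n}}{\sum_{i=1}^n X_i}\right]
=\int_0^\infty \prod_{i=1}^n \mathbb{E}\bigl[X_i^{1/n}\exp(-X_i x)\bigr]\,{\rm d}x
=\int_0^\infty \bigl(\mathbb{E}[X^{1/n}\exp(-Xx)]\bigr)^n {\rm d}x.
\end{align*}

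Next, I would compute the one-dimensional expectation explicitly for $X\sim\text{Gamma}(\alpha,\lambda)$. Multiplying the gamma density by $x^{1/n}\exp(-Xx)$ produces, up to normalization, the kernel of a $\text{Gamma}(\alpha+1/n,\,x+\lambda)$ density; the standard gamma integral then yields
\begin{align*}
\mathbb{E}\bigl[X^{1/n}\exp(-Xx)\bigr]
=\frac{\lambda^\alpha\,\Gamma(\alpha+1/n)}{\Gamma(\alpha)\,(x+\lambda)^{\alpha+1/n}}.
\end{align*}
Raising this to the $n$-th power gives an integrand proportional to $(x+\lambda)^{-n\alpha-1}$.

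Finally, I would carry out the outer integration using the elementary formula $\int_0^\infty (x+\lambda)^{-n\alpha-1}{\rm d}x = 1/(n\alpha\,\lambda^{n\alpha})$, so that the powers of $\lambda$ cancel cleanly:
\begin{align*}
\mathbb{E}\!\left[\frac{(\prod_{i=1}^n X_i)^{1/n}}{\sum_{i=1}^n X_i}\right]
=\frac{\Gamma^n(\alpha+1/n)}{n\alpha\,\Gamma^n(\alpha)}.
\end{align*}
Multiplying by $n$ and subtracting from $1$ yields the claimed formula. The $\lambda$-independence is automatic from the scale invariance of $\widehat{A}$ and serves as a useful sanity check.

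I do not anticipate a serious obstacle: the whole argument mirrors the Laplace-transform technique of Theorem \ref{main-theorem}, with the only difference being that the factor $X\log X$ inside the expectation is replaced by the simpler $X^{1/n}$, which actually makes the one-dimensional integral cleaner (no digamma terms appear). The one point requiring mild care is justifying the Fubini/Tonelli swap and checking that the exponent $n\alpha+1$ in the outer integral strictly exceeds $1$, which is automatic since $\alpha>0$ and $n\in\mathbb{N}$.
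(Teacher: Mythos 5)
Your proposal is correct and follows essentially the same route as the paper: the Laplace identity \eqref{fund-id} to linearize the denominator, Tonelli plus independence to factor the expectation into $\bigl(\mathbb{E}[X^{1/n}\exp(-Xx)]\bigr)^n$, an explicit gamma integral for the one-dimensional expectation, and a final elementary integration. The only cosmetic difference is that you pull the factor $n$ out front and apply the identity to $z=\sum_i X_i$, whereas the paper keeps the $1/n$ inside the exponential by taking $z=\frac{1}{n}\sum_i X_i$; the two are equivalent under a change of variable.
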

\begin{proof}
By using the identity \eqref{fund-id} with  $z=\sum_{i=1}^{n} X_i$, we have
\begin{align}\label{id-exp-A}
	\mathbb{E}(\widehat{A})
	&=
	1-\mathbb{E}\left[		
	\left(
	\prod_{i=1}^{n}
	X_{i} 
	\right)^{1/n}
	\int_{0}^{\infty}
	\exp\left\{-{1\over n}\left(\displaystyle\sum_{i=1}^{n} X_i\right) x\right\}{\rm d}x
	\right]
	\nonumber
	\\[0,2cm]
	&=
1-\mathbb{E}\left[	
\int_{0}^{\infty}
	\prod_{i=1}^{n}
X_{i}^{1/n}
\exp\left(-{1\over n}\, X_i x\right) 
{\rm d}x
	\right]
	\nonumber
		\\[0,2cm]
&=
		1-
		\int_{0}^{\infty}
		\mathbb{E}^n\left[	
		X^{1/n}
		\exp\left(-{1\over n}\, X x\right) 
		\right]
		{\rm d}x,
\end{align}
where, as in the proof of Theorem \ref{main-theorem}, the i.i.d. nature of the involved random variables, combined with the application of Tonelli's theorem, provide the foundation for justifying the preceding steps.

Straightforward calculations yield   
\begin{align*}
\mathbb{E}\left[	
X^{1/n}
\exp\left(-{1\over n}\, X x\right) 
\right]
=
{\lambda^\alpha \Gamma(\alpha+{1\over n})\over\Gamma(\alpha)({1\over n} x+\lambda)^{\alpha+{1\over n}}},
\end{align*}
which implies 
\begin{align} \label{id-exp-A-1}
		\int_{0}^{\infty}
\mathbb{E}^n\left[	
X^{1/n}
\exp\left(-{1\over n}\, X x\right) 
\right]
{\rm d}x
=
{\Gamma^n(\alpha+{1\over n})\over\alpha\Gamma^n(\alpha)}.
\end{align}
By combining \eqref{id-exp-A} and \eqref{id-exp-A-1} yields the desired result, thus completing the proof.
\end{proof}

\begin{remark}	
The scale invariance property of $\widehat{A}$ implies that its expectation $\mathbb{E}(\widehat{A})$ is unaffected by the rate $\lambda$, as shown in Theorem \ref{main-theorem-1}
\end{remark}

\begin{remark}
As 
$
	\lim_{n\to\infty}
{\Gamma^n(\alpha+{1/n})/\Gamma^n(\alpha)}
=
\exp\{\psi(\alpha)\},
$
from Theorem \ref{main-theorem-1} and Proposition \ref{prop-Atkinson -index} it follows that
$
		\lim_{n\to\infty}\mathbb{E}(\widehat{A})=A.
$
\end{remark}

By combining Proposition \ref{prop-Atkinson -index} and Theorem \ref{main-theorem-1}, we have:
\begin{corollary}\label{main-corollary-11}
	The bias of $\widehat{A}$ relative to $A$, denoted by $\text{Bias}(\widehat{A},A)$, can be written as
	\begin{align}\label{bias-A}
		\text{Bias}(\widehat{A},A)
		=
		{1\over \alpha}
		\left[
		\exp\{\psi(\alpha)\}
		-
		{\Gamma^n(\alpha+{1\over n})\over\Gamma^n(\alpha)}
		\right],
	\end{align}
	where $\Gamma(x)$ and $\psi(x)$ are the gamma and digamma functions, respectively.
\end{corollary}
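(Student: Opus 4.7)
The plan is very short because the statement is immediate from two earlier results. By definition, $\text{Bias}(\widehat{A},A)=\mathbb{E}(\widehat{A})-A$. Proposition \ref{prop-Atkinson -index} gives the population index $A=1-\exp\{\psi(\alpha)\}/\alpha$, and Theorem \ref{main-theorem-1} gives $\mathbb{E}(\widehat{A})=1-\Gamma^n(\alpha+1/n)/[\alpha\,\Gamma^n(\alpha)]$. So I would simply subtract: the two $1$'s cancel, pull out the common factor $1/\alpha$, and the expression in \eqref{bias-A} drops out.

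Concretely, the only step is the identity
\begin{align*}
\mathbb{E}(\widehat{A})-A
=\left[1-\frac{\Gamma^n(\alpha+\tfrac{1}{n})}{\alpha\,\Gamma^n(\alpha)}\right]-\left[1-\frac{\exp\{\psi(\alpha)\}}{\alpha}\right]
=\frac{1}{\alpha}\left[\exp\{\psi(\alpha)\}-\frac{\Gamma^n(\alpha+\tfrac{1}{n})}{\Gamma^n(\alpha)}\right],
\end{align*}
which is exactly \eqref{bias-A}. No further manipulation, no special-function identity, and no interchange of limits is required.

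Since the derivation is purely arithmetic, there is no genuine obstacle; the ``main step'' is really just bookkeeping of signs (bias defined as estimator minus parameter, so the $\exp\{\psi(\alpha)\}$ term enters with a plus sign after the subtraction). If one wanted to add value beyond the bare computation, it would be to comment on the sign of the bias: by the convexity of $\log\Gamma$, the ratio $\Gamma^n(\alpha+1/n)/\Gamma^n(\alpha)$ converges to $\exp\{\psi(\alpha)\}$ from above as $n\to\infty$ (this is essentially the limit recorded in the preceding remark), so the bracketed quantity is nonpositive and hence $\text{Bias}(\widehat{A},A)\le 0$, in parallel with Remarks following Corollaries \ref{main-corollary} and \ref{main-corollary-1}. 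But this observation is optional; the statement of the corollary itself follows in one line from the two cited results.
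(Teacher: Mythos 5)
Your proof is correct and is exactly the paper's argument: the corollary is stated as an immediate consequence of combining Proposition \ref{prop-Atkinson -index} and Theorem \ref{main-theorem-1}, and your one-line subtraction is all that is needed. The optional sign observation via convexity of $\log\Gamma$ is also correct, though not part of the paper's proof.
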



\section{Illustrative simulation study}\label{sec:04}

Based on the analytical bias expressions derived in Section~\ref{sec:03}, we propose bias-corrected versions of the estimators for the Theil $T$, Theil $L$, and Atkinson indexes, given by
\eqref{gini-estimadtor-def},
\eqref{def-index-L-est}, and
\eqref{gini-estimadtor-def-1}, respectively. Let $X_1, X_2, \dots, X_n$ be an i.i.d. sample from a Gamma$(\alpha, \lambda)$ distribution, and denote by $\widehat{\alpha}$ the maximum likelihood estimator of the shape parameter $\alpha$. Then, the bias-corrected estimators are given as follows.
\begin{itemize}
    \item {Theil $T$ index:}
    \begin{equation}
    \widehat{T}_T^{\,\text{corr}} =
    \left[
        \frac{\sum_{i=1}^n X_i \log(X_i)}{\sum_{i=1}^n X_i}
        - \log\left( \frac{1}{n} \sum_{i=1}^n X_i \right)
    \right]
    - \left[
        \log(n \widehat{\alpha}) - \frac{1}{n \widehat{\alpha}} - \psi(n \widehat{\alpha})
    \right],
    \label{eq:theil_T_corr}
    \end{equation}.

    \item {Theil $L$ index:}
    \begin{equation}
    \widehat{T}_L^{\,\text{corr}} =
    \left[
        \log\left( \frac{1}{n} \sum_{i=1}^n X_i \right)
        - \frac{1}{n} \sum_{i=1}^n \log X_i
    \right]
    - \left[
        \psi(n \widehat{\alpha}) - \log(n \widehat{\alpha})
    \right],
    \label{eq:theil_L_corr}
    \end{equation}

    \item {Atkinson index:}
    \begin{equation}
    \widehat{A}^{\,\text{corr}} =
    \left[
        1 - \frac{\left( \prod_{i=1}^n X_i \right)^{1/n}}{\frac{1}{n} \sum_{i=1}^n X_i}
    \right]
    - \frac{1}{\widehat{\alpha}} \left[
        \exp\{\psi(\widehat{\alpha})\}
        - \frac{\Gamma^n\left( \widehat{\alpha} + \frac{1}{n} \right)}{\Gamma^n(\widehat{\alpha})}
    \right],
    \label{eq:atkinson_corr}
    \end{equation}
\end{itemize}
where $\Gamma(x)$ and $\psi(x)$ are the gamma and digamma functions, respectively.

We assess the performance of the original and bias-corrected estimators for the Theil $T$, Theil $L$, and Atkinson indexes, by means of a Monte Carlo simulation study. Independent and identically distributed samples were generated from the Gamma distribution with varying shape parameters $\alpha \in \{0.1, 0.5, 1.5, 2.0\}$, rate parameter $\lambda=1.0$, and sample sizes $n \in \{10, 20, 50, 100, 200\}$. For each scenario, 1,000 replications were performed. The steps of the Monte Carlo simulation study are described in Algorithm 1.

\begin{algorithm}[!ht]
\caption{Monte Carlo simulation for bias-corrected Theil and Atkinson index estimators under the Gamma model.}
\label{algorithm:gamma}
\begin{algorithmic}[1]
\State \textbf{Input:} Number of simulations $N_{\text{sim}} = 1000$; sample sizes $n \in \{10, 20, 50, 100, 200\}$; shape parameters $\alpha \in \{0.1, 0.5, 1.5, 2.0\}$; rate parameter $\lambda=1.0$.
\State \textbf{Output:} Bias and MSE of original and bias-corrected estimators: $\widehat{T}_T$, $\widehat{T}_T^{\text{corr}}$, $\widehat{T}_L$, $\widehat{T}_L^{\text{corr}}$, $\widehat{A}$, and $\widehat{A}^{\text{corr}}$.

\For{each shape parameter $\alpha$}
  \State Compute the true population values $T_T$, $T_L$, and $A$ using Propositions~2.1--2.3.
  \For{each sample size $n$}
    \For{each simulation run $s = 1, \dots, N_{\text{sim}}$}
        \State \textbf{Step 1: Generate data}
        \State Simulate $X_1, \dots, X_n \sim \text{Gamma}(\alpha, \lambda = \alpha)$.
        \State Estimate $\widehat{\alpha}$ via maximum likelihood.

        \State \textbf{Step 2: Compute estimators}
        \State Compute $\widehat{T}_T$ using Equation~\eqref{gini-estimadtor-def}.
        \State Compute $\widehat{T}_T^{\text{corr}}$ using Equation~\eqref{eq:theil_T_corr}.
        \State Compute $\widehat{T}_L$ using Equation~\eqref{def-index-L-est}.
        \State Compute $\widehat{T}_L^{\text{corr}}$ using Equation~\eqref{eq:theil_L_corr}.
        \State Compute $\widehat{A}$ using Equation~\eqref{gini-estimadtor-def-1}.
        \State Compute $\widehat{A}^{\text{corr}}$ using Equation~\eqref{eq:atkinson_corr}.
    \EndFor

    \State \textbf{Step 3: Compute Monte Carlo summaries}
    \For{each estimator $E \in \{\widehat{T}_T, \widehat{T}_T^{\text{corr}}, \widehat{T}_L, \widehat{T}_L^{\text{corr}}, \widehat{A}, \widehat{A}^{\text{corr}}\}$}
        \State Compute empirical bias:
        \[
        \widehat{\text{Relative Bias}}(E) = \frac{1}{N_{\text{sim}}} \sum_{k=1}^{N_{\text{sim}}} \frac{E^{(k)} - E_{\text{true}}}{E_{\text{true}}},
        \]
        \State Compute mean squared error:
        \[
        \widehat{\text{MSE}}(E) = \frac{1}{N_{\text{sim}}} \sum_{k=1}^{N_{\text{sim}}} \big(E^{(k)} - E_{\text{true}}\big)^2,
        \]
        where $E_{\text{true}} \in \{T_T, T_L, A\}$ is the corresponding true value.
    \EndFor
  \EndFor
\EndFor
\State \textbf{Return:} Summary with bias and MSE for all estimators across all settings.
\end{algorithmic}
\end{algorithm}

Figures~\ref{fig:theilT}, \ref{fig:theilL}, and \ref{fig:atkinson} summarize the Monte Carlo simulation results for the original and bias-corrected estimators of the Theil $T$, Theil $L$, and Atkinson indexes, respectively. Each figure reports the relative bias and mean squared error (MSE) of the estimators as functions of the sample size (with fixed shape parameter $\alpha = 1.5$) and as functions of the shape parameter $\alpha$ (with fixed sample size $n = 20$). From these figures, we observe that the bias-corrected estimators exhibit markedly lower bias than their original counterparts, especially for small samples. Moreover, the proposed corrections prove effective across the entire range of $\alpha$, maintaining low relative bias. Fnally, we observe that, in general, both estimators exhibit similar MSE values across the configurations considered.

\begin{figure}[!ht]
\centering
\includegraphics[width=0.48\textwidth]{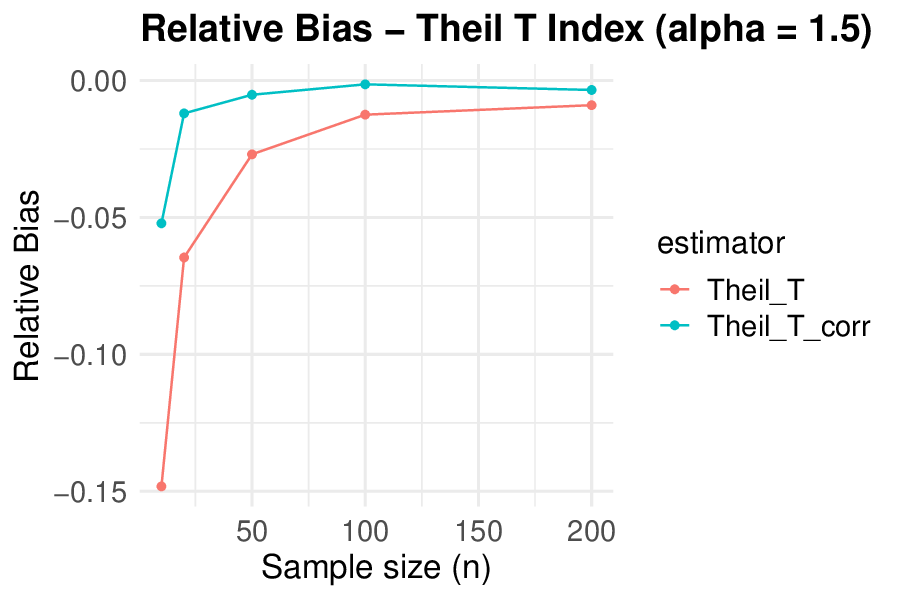}
\includegraphics[width=0.48\textwidth]{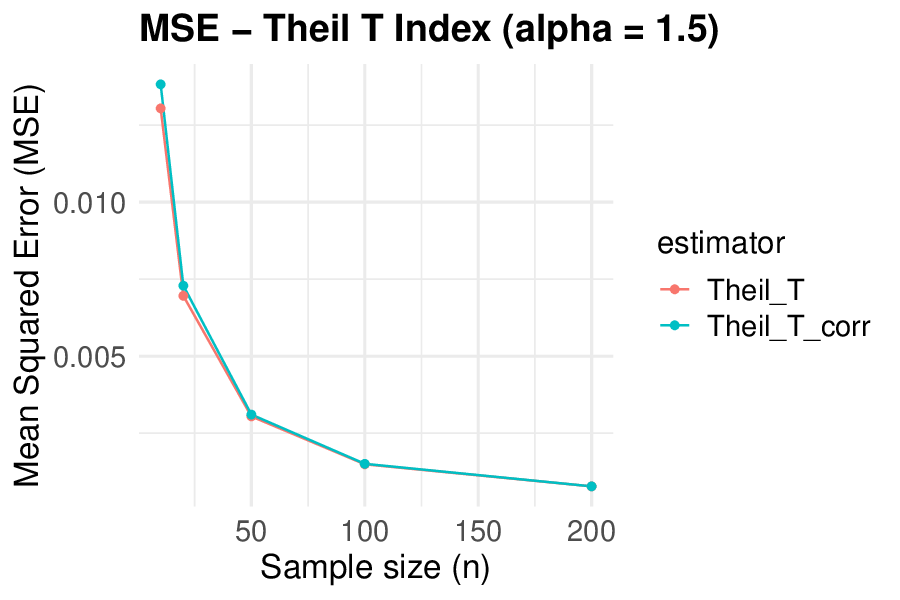}

\vspace{0.4cm}

\includegraphics[width=0.48\textwidth]{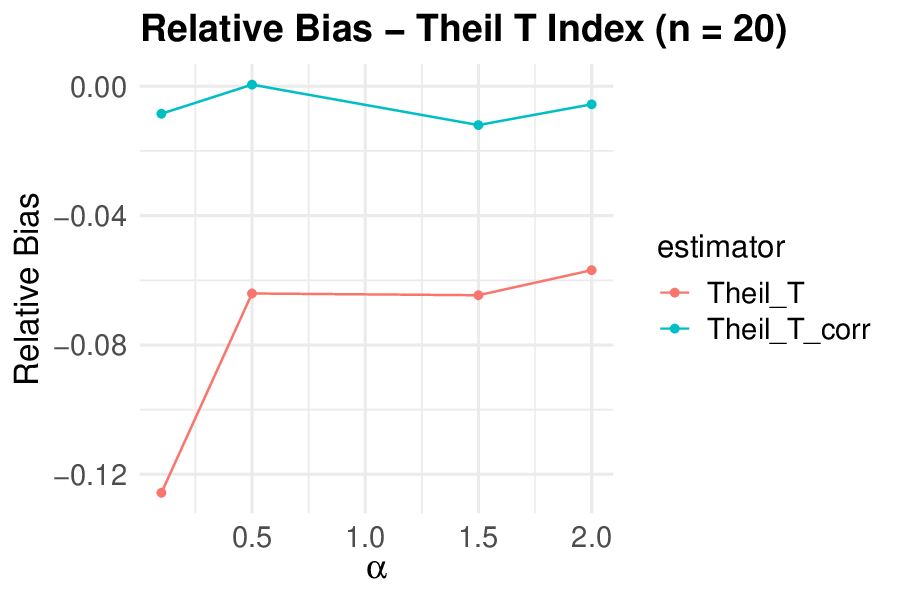}
\includegraphics[width=0.48\textwidth]{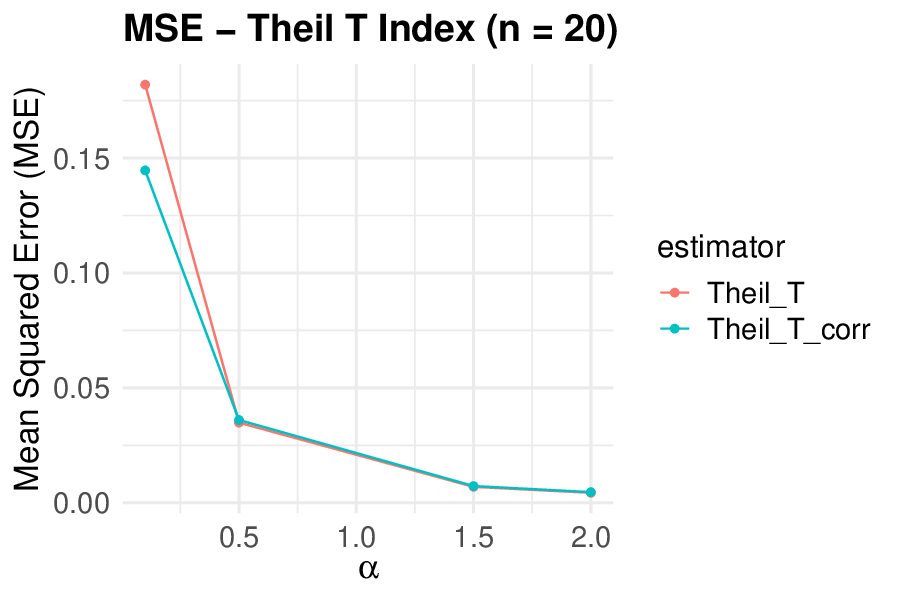}

\caption{Monte Carlo results for the Theil $T$ index estimators (original and bias-corrected). Top: relative bias and MSE as functions of sample size (\(\alpha = 1.5\)). Bottom: relative bias and MSE as functions of shape parameter \(\alpha\) (\(n = 20\)).}
\label{fig:theilT}
\end{figure}

\begin{figure}[!ht]
\centering
\includegraphics[width=0.48\textwidth]{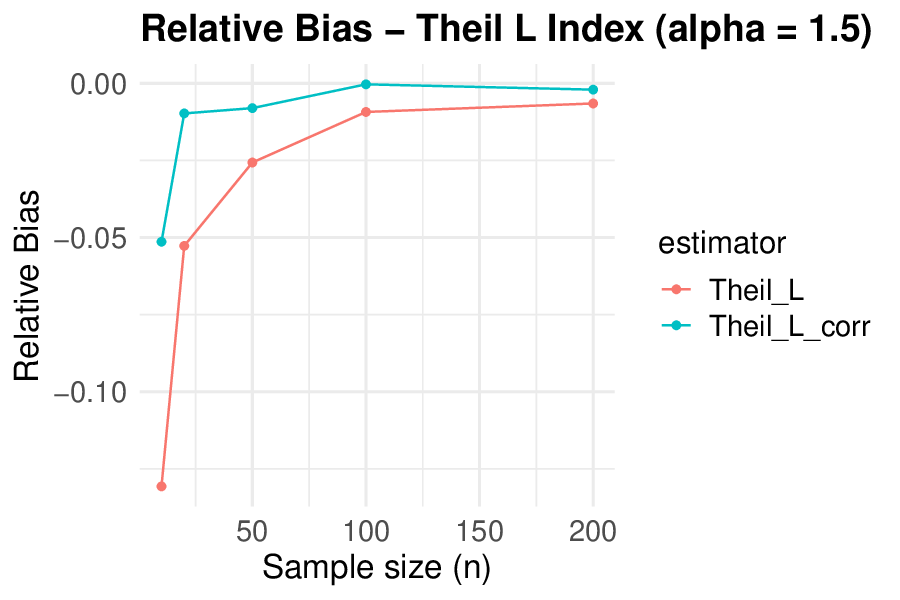}
\includegraphics[width=0.48\textwidth]{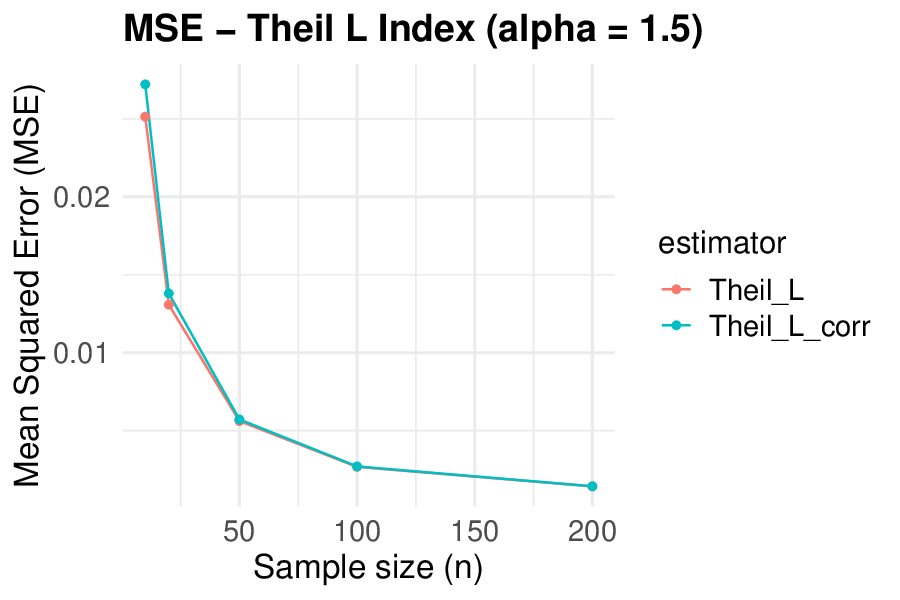}

\vspace{0.4cm}

\includegraphics[width=0.48\textwidth]{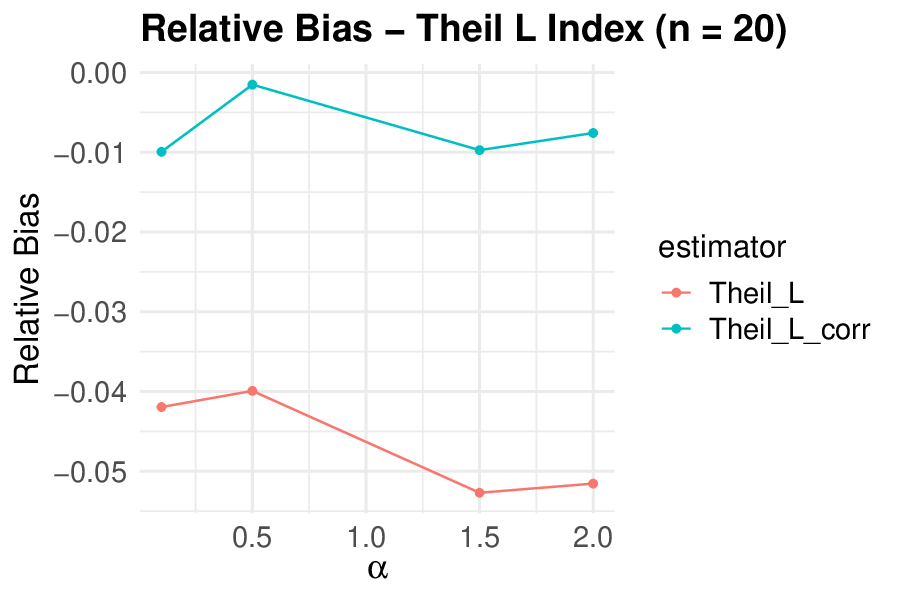}
\includegraphics[width=0.48\textwidth]{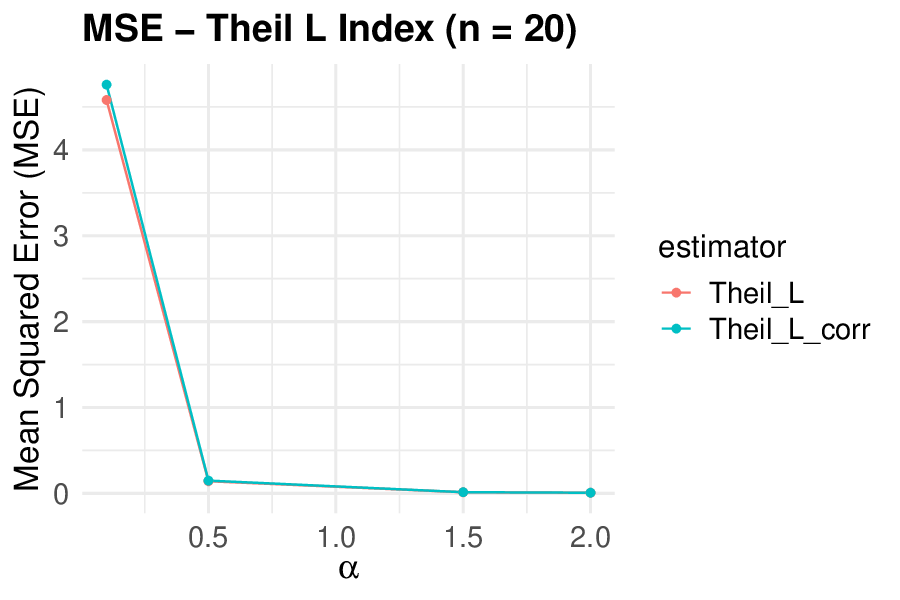}

\caption{Monte Carlo results for the Theil $L$ index estimators (original and bias-corrected). Top: relative bias and MSE as functions of sample size (\(\alpha = 1.5\)). Bottom: relative bias and MSE as functions of shape parameter \(\alpha\) (\(n = 20\)).}
\label{fig:theilL}
\end{figure}

\begin{figure}[!ht]
\centering
\includegraphics[width=0.48\textwidth]{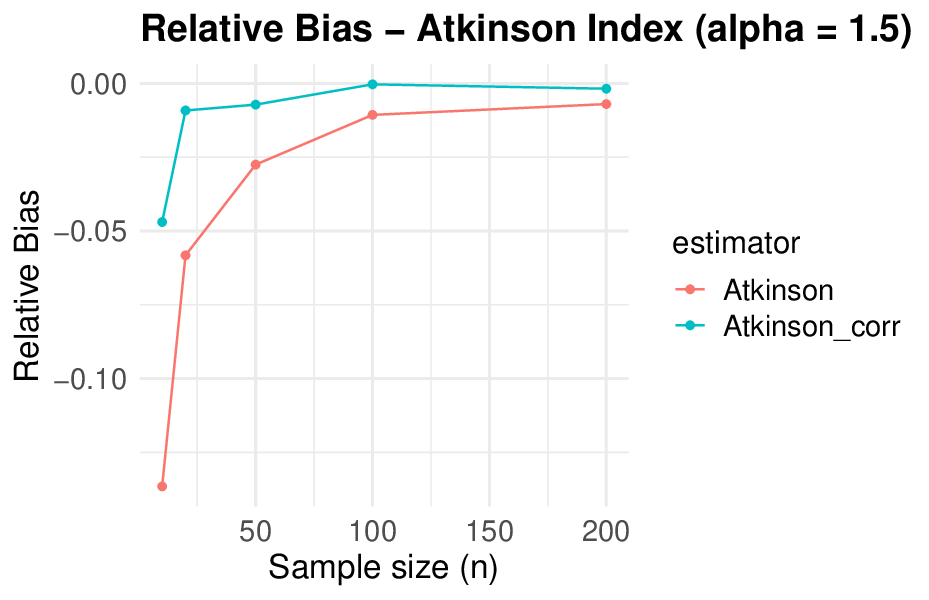}\hspace{0.4cm}
\includegraphics[width=0.48\textwidth]{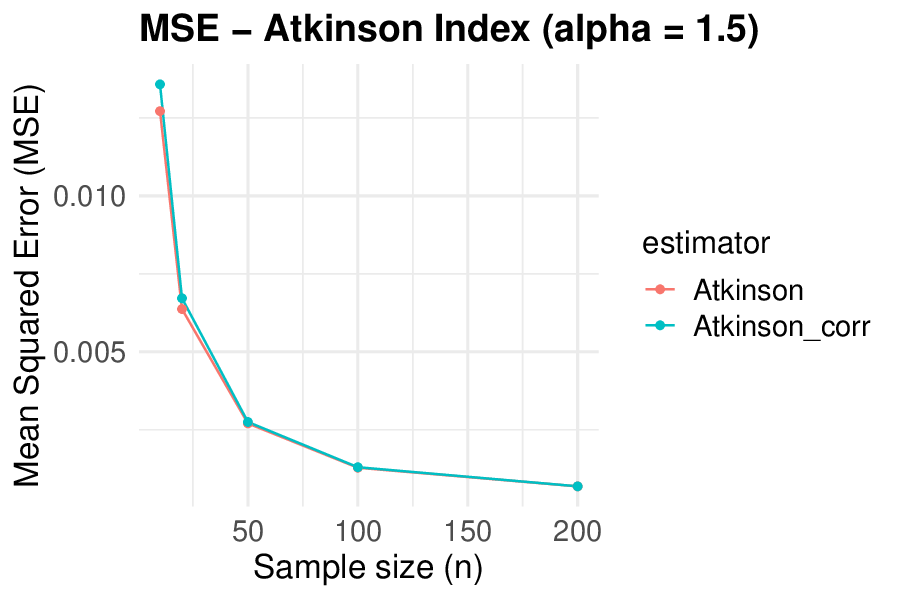}

\vspace{0.4cm}

\includegraphics[width=0.48\textwidth]{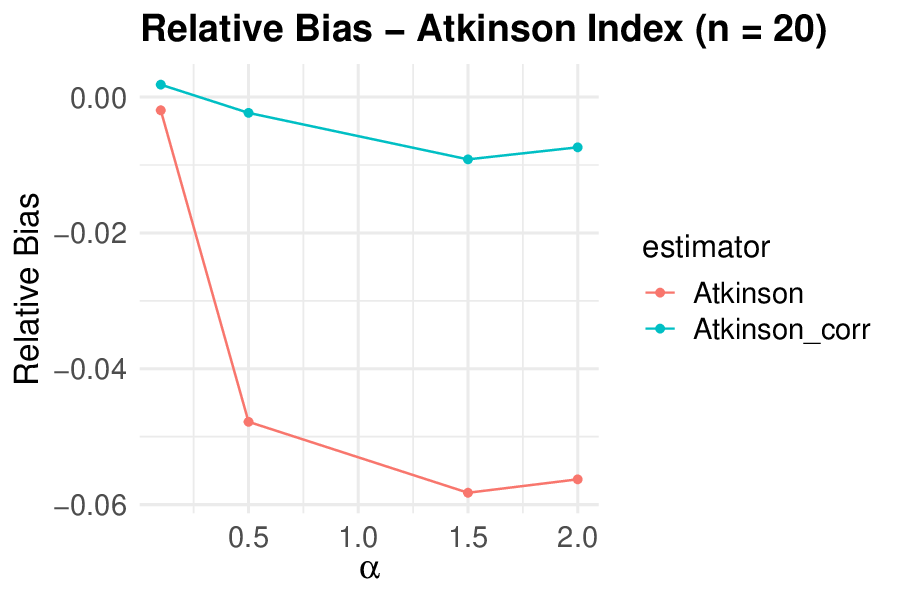}
\includegraphics[width=0.48\textwidth]{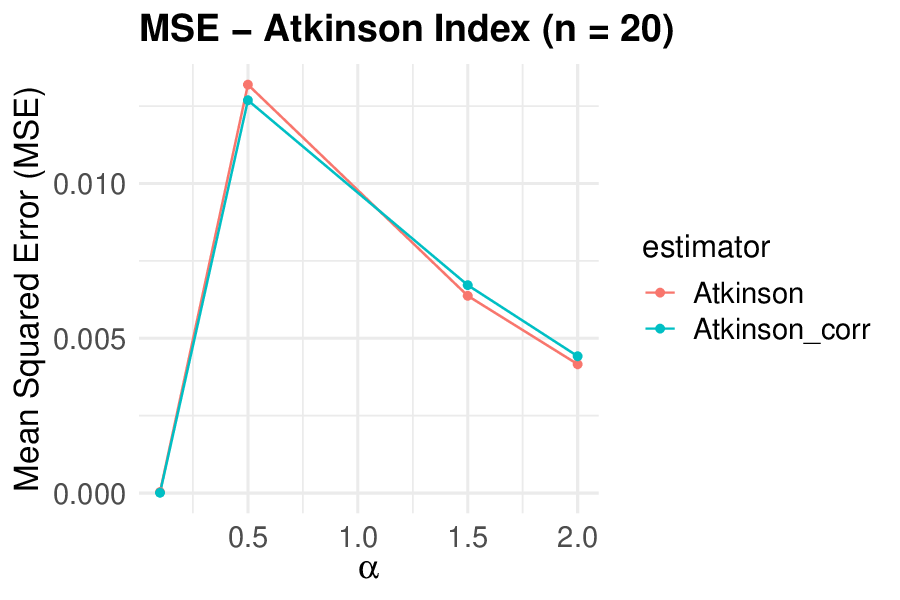}

\caption{Monte Carlo results for the Atkinson index estimators (original and bias-corrected). Top: relative bias and MSE as functions of sample size (\(\alpha = 1.5\)). Bottom: relative bias and MSE as functions of shape parameter \(\alpha\) (\(n = 20\)).}
\label{fig:atkinson}
\end{figure}

\section{Concluding remarks}\label{sec:05}

In this paper, we have investigated the bias in the estimation of the Theil $T$, Theil $L$, and Atkinson indexes under gamma-distributed populations. We have derived closed-form expressions for the biases of these estimators and used them to construct explicit bias-corrected versions. An illustrative Monte Carlo simulation study was conducted to evaluate the finite-sample performance of the original and proposed bias-corrected estimators. The results demonstrate that the corrected estimators substantially reduce the relative bias across all configurations considered, offering significant improvements over the original estimators while maintaining similar levels of mean squared error. As part of future research, it will be of interest to extend the study to multivariate Theil $T$, Theil $L$, and Atkinson indexes. Furthermore, alternative correction methods or the analysis of gamma mixture models can be developed. Work on these problems is currently in progress and we hope to report these findings in future.


%
%
%


\clearpage

\paragraph*{Acknowledgements}
The research was supported in part by CNPq and CAPES grants from the Brazilian government.

\paragraph*{Disclosure statement}
There are no conflicts of interest to disclose.





\end{document}